% interactapasample.tex
% v1.05 - August 2017

\documentclass[]{interact}

\usepackage{epstopdf}% To incorporate .eps illustrations using PDFLaTeX, etc.
\usepackage[caption=false]{subfig}% Support for small, `sub' figures and tables

\usepackage[longnamesfirst,sort]{natbib}% Citation support using natbib.sty
\bibpunct[, ]{(}{)}{;}{a}{,}{,}% Citation support using natbib.sty
% To set the list of references in 10 point font using natbib.sty

%\usepackage[natbibapa,nodoi]{apacite}% Citation support using apacite.sty. Commands using natbib.sty MUST be deactivated first!
%\setlength\bibhang{12pt}% To set the indentation in the list of references using apacite.sty. Commands using natbib.sty MUST be deactivated first!
%\renewcommand\bibliographytypesize{\fontsize{10}{12}\selectfont}% To set the list of references in 10 point font using apacite.sty. Commands using natbib.sty MUST be deactivated first!

\usepackage{graphicx}
\DeclareGraphicsExtensions{.pdf,.png,.jpg,.eps}

\usepackage{amsmath}
\usepackage{amssymb}
\usepackage{natbib}

\theoremstyle{plain}% Theorem-like structures provided by amsthm.sty
\newtheorem{theorem}{Theorem}[section]

\newtheorem{corollary}[theorem]{Corollary}

\theoremstyle{definition}

\theoremstyle{remark}
\newtheorem{remark}{Remark}

\begin{document}

\articletype{ARTICLE TEMPLATE}% Specify the article type or omit as appropriate

\title{Stability and Instability Divergence Conditions for Dynamical Systems}

\author{
\name{I.~B. Furtat\textsuperscript{a,b}\thanks{CONTACT I.~B. Furtat. Email: cainenash@mail.ru}}
\affil{\textsuperscript{a} Institute for Problems of Mechanical Engineering Russian Academy of Sciences, 61 Bolshoy ave V.O., St.-Petersburg, 199178, Russia;}
{\textsuperscript{b} ITMO University, 49 Kronverkskiy ave, Saint Petersburg, 197101, Russia;}
}

\maketitle

\begin{abstract}
A novel method for stability and instability study of autonomous dynamical systems using the flow and divergence of the vector field is proposed.
A relation between the method of Lyapunov functions and the proposed method is established. 
Bendixon and Bendixon-Dulac theorems for $n$th dimensional systems are extended.
Based on the proposed method, the state feedback control law is designed. 
The control signal is obtained from the partial differential inequality.
The examples illustrate the application of the proposed method and the existing ones.
\end{abstract}

\begin{keywords}
Dynamical system; stability; flow of vector field; divergence; control
\end{keywords}

\section{Introduction}
Dynamical models describe many processes in the surrounding macro and microcosm. One of the important problems is the study of the solution convergence of these models. However, it is not always possible to find an explicit solution, but numerical solutions can significantly differ from the exact ones, see, e.g. \cite{Atkinson09}.

It is well known that the method of Lyapunov functions allows one to study the stability of solutions of differential equations without solving them. This method is proposed by A.M. Lyapunov at the end of the 19 century in his doctoral dissertation with application to problems of astronomy and fluid motion. Depending on the problem being solved, Lyapunov function is also interpreted as a potential function (\cite{Yuan14}), an energy function (\cite{Bikdash00}) or a storage function (\cite{Willems72}). However, the main restriction of the method of Lyapunov functions is to find these functions.

Methods for stability study of dynamical systems based on the divergence of a vector field are alternative to the method of Lyapunov functions. The first fundamental results based on divergent stability conditions were proposed in \cite{Zaremba54,Fronteau65,Brauchli68}. 
%%%%%%%%%%%%%%%%%%%%%%%%%%%%%%%%%%%%%%%%%%%%%%%%%%%%%%%
The important results for investigation of system stability were proposed by A. Rantzer, A.A. Shestakov,  A.N. Stepanov and V.P. Zhukov.
%%%%%%%%%%%%%%%%%%%%%%%%%%%%%%%%%%%%%%%%%%%%%%%%%%%%%%%
In \cite{Jukov78} the instability problem of nonlinear systems using the divergence of a vector field is considered. 
In \cite{Shestakov78,Jukov79} a necessary condition for the stability of nonlinear systems in the form of non-positivity of the vector field divergence is proposed. 
First, an auxiliary scalar function is introduced in \cite{Shestakov78,Jukov90} to study the instability of nonlinear systems. 
However, the similar scalar function is considered in \cite{Krasnoselski63} for stability and instability study of dynamical systems, but using the method of Lyapunov functions.
In \cite{Shestakov78,Jukov99} stability conditions for second-order systems are obtained. 
Then in \cite{Rantzer00,Rantzer01} the convergence of almost all solutions of arbitrary order nonlinear dynamical systems is considered. 
As in \cite{Shestakov78,Jukov90,Jukov99} the auxiliary scalar function (density function) is used for the stability study of dynamical models. Additionally, in \cite{Rantzer00,Rantzer01} the synthesis of the control law based on divergence conditions is proposed.
The auxiliary functions in \cite{Shestakov78,Jukov99,Rantzer00,Rantzer01} are similar except their properties at the equilibrium point. 
Currently, method from \cite{Rantzer00,Rantzer01} has been extended to various systems, see i.e. \cite{Monzon03,Loizou08,Castaneda15,Karabacak18}.

However, the necessary condition is sufficiently rough in \cite{Shestakov78,Jukov99}. 
The sufficient condition stability is proposed only for second-order systems in \cite{Jukov99}. 
Theorem 1 in \cite{Rantzer01} guarantees the convergence of almost all solutions, but not all solutions. 
Proposition 2 in \cite{Rantzer01} allows to study the asymptotic stability, but proposition conditions have sufficient restriction. 

Recently, a new divergent method for stability study of autonomous dynamical system is proposed in \cite{Furtat20a, Furtat20b}. 
This method allows one to overcome the above mentioned problems. 
In the present paper we extend results \cite{Furtat20a, Furtat20b} such that new sufficient stability and instability conditions are obtained for linear and nonlinear systems. These results allows one to extend Bendixon and Bendixon-Dulac theorems to $n$th dimensional systems and design the new control laws.

%In the present paper, we propose a new method for the stability study of dynamical systems using the flow and divergence of the vector field.
%New necessary and sufficient conditions are obtained. 
%The relation between the method of Lyapunov functions and the proposed method is established. 
%The method for design the control law based on the new divergence conditions is proposed.
%Numerical examples illustrate the applicability of the proposed method and the methods from \cite{Shestakov78,Jukov99,Rantzer00,Rantzer01}.

The paper is organized as follows. 
Section \ref{Sec2} contains new necessary and sufficient stability and instability conditions, the extension of Bendixon and Bendixon-Dulac theorems, as well as, the numerical examples and comparisons with the methods from \cite{Shestakov78,Jukov99,Rantzer00,Rantzer01}. 
Section \ref{Sec3} describes methods for design the state feedback control law and numerical examples. 
Finally, Section \ref{Sec4} collects some conclusions.

\textit{Notations}. In the paper the following notation are used: $\textup{grad}\{W(x)\}=\Big[\frac{\partial W}{\partial x_1}, ...,\frac{\partial W}{\partial x_n}\Big]^{\rm T} $ is the gradient of the scalar function $W(x)$, $\textup{div}\{h(x)\}=\frac{\partial h_1}{\partial x_1}+...+\frac{\partial h_n}{\partial x_n}$ is the divergence of the vector field $h(x)=[h_1(x),...,h(x)_n]^{\rm T}$, $|\cdot|$ is the Euclidean norm of the corresponding vector.
We mean that the zero equilibrium point is stable (unstable) if it is Lyapunov stable (unstable) (\cite{Khalil09}).

\section{Maun results} \label{Sec2}

\subsection{Stability of nonlinear systems} \label{Sec2a}

Consider a dynamical system in the form
\begin{equation}
\label{eq1}
\begin{array}{l} 
\dot{x}=f(x),
\end{array}
\end{equation} 
where $x=[x_1(t), ..., x_n(t)]^{\rm T}$ is the state vector, $f=[f_1,...,f_n]^{\rm T}: D \to \mathbb R^{n}$ is the continuously differentiable function in $D \subset \mathbb R^{n}$. The set $D$ contains the origin and $f(0)=0$.
For simplicity, we assume that the domain of attraction $D_A$ of the point $x=0$ coincides with the domain $D$. However, all obtained results is valid if $D_A \subset D$ or $D_A=\mathbb R^{n} $. Denote by $\bar{D}$ a boundary of the domain $D$.

Let us formulate the necessary stability condition for system \eqref{eq1}.

\begin{theorem}
\label{Th1}
Let $x=0$ be an asymptotically stable equilibrium point of \eqref{eq1}. Then there exists a positive definite continuously differentiable function $S(x)$ such that $S(x)\to\infty$ for $x \to \bar{D} $, $|\textup{grad}\{S(x)\}| \neq 0$ for any $x \in D \setminus \{0\} $ and at least one of the following conditions holds:

\textup{1)} the function $\textup{div}\{|\textup{grad} \{S(x)\} |f(x)\}$ is integrable in the domain $V=\{x \in D: S(x) \leq C \}$ and $\int_{V} \textup{div} \{|\textup{grad} \{S(x)\} |f(x)\}dV <0$ for all $C>0$;

\textup{2)} the function $\textup{div} \{|\textup{grad} \{S^{-1}(x)\}|f(x)\}$ is integrable in the domain $V_{inv}=\{x \in D: S^{-1}(x) \geq C \}$ and $\int_{V_{inv}} \textup{div}\{|\textup{grad}\{S^{-1}(x)\} |f(x)\} dV_{inv}>0$ for all $C>0$.
\end{theorem}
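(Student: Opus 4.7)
The plan is to invoke the converse Lyapunov theorem to supply the function $S$ and then reduce the assertion of the theorem to the divergence theorem applied to the vector field $|\textup{grad}\{S\}|f$ on a sublevel set of $S$. Since only one of the two alternatives has to hold, I would aim at condition 1) and not touch condition 2).

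First, because $x=0$ is asymptotically stable with domain of attraction $D_A=D$, the classical converse theorem (Massera/Kurzweil) produces a continuously differentiable, positive definite function $S:D\to[0,\infty)$ that is proper on $D$, i.e.\ $S(x)\to\infty$ as $x\to\bar{D}$, and whose Lie derivative along \eqref{eq1} satisfies
\begin{equation*}
\dot S(x)=\textup{grad}\{S(x)\}^{\mathrm T}f(x)<0\quad\text{for all }x\in D\setminus\{0\}.
\end{equation*}
Since $0$ is the only equilibrium in $D=D_A$, $f(x)\neq 0$ on $D\setminus\{0\}$; combined with strict negativity of $\dot S$, this forces $|\textup{grad}\{S(x)\}|\neq 0$ throughout $D\setminus\{0\}$. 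Thus $S$ meets every regularity condition requested in the statement.

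Second, I would fix an arbitrary $C>0$ and study $V=\{x\in D:S(x)\leq C\}$. By the regular value theorem, $\partial V=\{S=C\}$ is a smooth hypersurface with outward unit normal $n=\textup{grad}\{S\}/|\textup{grad}\{S\}|$. The computational heart of the argument is the pointwise identity
\begin{equation*}
\bigl(|\textup{grad}\{S\}|f\bigr)\cdot n=|\textup{grad}\{S\}|\,f\cdot\frac{\textup{grad}\{S\}}{|\textup{grad}\{S\}|}=\textup{grad}\{S\}^{\mathrm T}f=\dot S<0\text{ on }\partial V.
\end{equation*}
Applying Gauss's divergence theorem on $V\setminus B_\varepsilon(0)$ (to avoid the one point where $|\cdot|$ fails to be smooth) and letting $\varepsilon\to 0$ would yield
\begin{equation*}
\int_V\textup{div}\bigl\{|\textup{grad}\{S\}|f\bigr\}\,dV=\int_{\partial V}\dot S\,dS<0,
\end{equation*}
which is exactly condition 1).

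The only real subtlety, and the point I would spend the most care on, is the vanishing of the inner boundary contribution on $\partial B_\varepsilon(0)$ as $\varepsilon\to 0$. This relies on $\textup{grad}\{S\}(0)=0$ (since $0$ is the minimum of $S$) and $f(0)=0$, so that $|\textup{grad}\{S\}|f=O(|x|^2)$ near the origin while the surface measure on $\partial B_\varepsilon$ is $O(\varepsilon^{n-1})$; the product is $o(1)$ in every dimension $n\geq 1$. Once that limit is justified, the integrability of $\textup{div}\{|\textup{grad}\{S\}|f\}$ on $V$ also follows from the divergence identity, completing the verification of condition 1) for every $C>0$ and hence the whole theorem.
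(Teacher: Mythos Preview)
Your proposal is correct and follows essentially the same route as the paper: invoke the converse Lyapunov theorem to obtain $S$, then apply the divergence (Gauss) theorem on the sublevel set $V=\{S\le C\}$ using that the outward unit normal on $\partial V$ is $\textup{grad}\{S\}/|\textup{grad}\{S\}|$, so the surface integrand reduces to $\dot S<0$. The paper in fact verifies both alternatives (handling condition~2 via $\textup{grad}\{S^{-1}\}^{\mathrm T}f=-S^{-2}\dot S>0$) and does not excise a neighbourhood of the origin, whereas you rightly observe that establishing condition~1) alone suffices and are more careful about the limit over $\partial B_\varepsilon(0)$.
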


\begin{proof}
According to \cite[Theorem 4.17]{Khalil09} if $x=0$ is an asymptotically stable equilibrium point of system \eqref{eq1}, then there exists a continuously differentiable positive definite function $S(x)$ such that $S(x) \to \infty$ for $x \to \bar{D}$, $\textup {grad} \{S(x)\}^{\rm T} f(x)<0$ for any $x \in D \setminus \{0\}$ and $\textup {grad} \{S(x)\}^{\rm T}f(x)\Big|_{x = 0}=0$. If $D=\mathbb R^n$, then the function $S(x)$ is radially unbounded.
Next, we consider two cases separately which correspond to the functions $S(x)$ and $S^{-1}(x)$.
 
1. If $\textup{grad}\{S(x)\}^{\rm T}f(x)<0 $, then
$\frac{1}{|\textup{grad}\{S(x)\}|} \textup{grad}\{S(x)\}^{\rm T}| \textup{grad} \{S(x)\}|f(x)<0$. Therefore, the following expression holds
$F_1=\oint_{\Gamma}\frac{1}{|\textup{grad}\{S(x)\}|}\textup{grad}\{S(x)\}^{\rm T}|\textup{grad}\{S(x)\}|f(x)d \Gamma<0.$
Using Divergence theorem (or Gauss theorem), we get $F_1=\int_{V}\textup{div}\{|\textup{grad}\{S(x)\}|f(x)\}dV<0$.

2. If $\textup{grad}\{S(x)\}^{\rm T}f(x)<0$, 
then $\textup{grad}\{S^{-1}(x)\}^{\rm T}f(x)=-S^{-2}(x)\textup {grad} \{S (x)\}^{\rm T} f(x)>0$. 
On the other hand,
$\textup {grad} \{S^{-1}(x)\}^{\rm T}f(x)
= \frac{1}{|\textup {grad}\{S^{-1}(x)\}|} \textup{grad}\{S^{- 1} (x)\}^{\rm T} |\textup{grad}\{S^{- 1}(x)\}|f(x).$
Therefore, the following relation is satisfied
$F_2=\oint_{\Gamma_{inv}}\frac{1}{|\textup{grad}\{S^{-1}(x)\}|}
\textup{grad}\{S^{-1}(x)\}^{\rm T}|\textup{grad}\{S^{-1}(x)\}|f(x)d \Gamma_{inv}>0.$
According to Divergence theorem, we get
$F_2=\int_{V_{inv}}\textup{div}\{|\textup{grad}\{S^{-1}(x)\}|f(x)\}dV_{inv}>0.$
Theorem \ref{Th1} is proved.
\end{proof}

The integrals in Theorem \ref{Th1} explicitly depend on the function $S(x)$ that depends on the integration surface. Let us formulate a corollary that weakens this requirement.

\begin{corollary}
\label{Th2}
Let $x =0$ be the asymptotically stable equilibrium point of system \eqref{eq1}. Then there exist positive definite continuously differentiable functions 
$\phi(x)$ and $S(x)$ such that $\phi(x) \to \infty$ and $S(x) \to \infty$ for $x \to \bar{D}$, $|\textup{grad} \{S(x)\} |\neq 0$ for any $x \in D \setminus \{0\}$  and at least one of the following conditions holds:

1) the function $\textup {div}\{\rho(x)f(x)\}$ is integrable in the domain $V=\{x \in D: S(x) \leq C \}$ and $\int_{V} \textup{div} \{\rho(x) f(x) \} dV <0$ for all $C>0 $, where $\rho(x)=\phi(x) |\textup{grad}\{S(x)\}|$;

2) the function $\textup{div}\{\rho^{-1}(x)f(x)\}$ is integrable in the domain $V_{inv}=\{x \in D: S^{-1}(x \geq C\}$ and $\int_{V_{inv}} \textup{div} \{\rho^{-1}(x)f(x)\} dV_{inv}>0 $ for all $C>0$, where $\rho^{-1}(x)=\phi^{-1} (x)| \textup{grad} \{S^{-1}(x)\}|$.

\end{corollary}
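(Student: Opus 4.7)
The plan is to mimic the argument of Theorem \ref{Th1} verbatim, inserting an additional positive weight $\phi(x)$ wherever appropriate. By the same converse Lyapunov theorem \cite[Theorem 4.17]{Khalil09} cited in the proof of Theorem \ref{Th1}, asymptotic stability of the origin delivers a continuously differentiable positive definite $S(x)$ satisfying $S(x)\to\infty$ as $x\to\bar D$ and $\textup{grad}\{S(x)\}^{\rm T} f(x)<0$ on $D\setminus\{0\}$. Independently we pick any continuously differentiable positive definite $\phi(x)$ with $\phi(x)\to\infty$ as $x\to\bar D$; the role of $\phi$ is purely to reweight the flux integrand without touching the sign of the Lyapunov derivative, providing extra freedom when one later tries to meet the integrability requirement.

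For case 1, since $\phi(x)>0$ and $|\textup{grad}\{S(x)\}|>0$ on $D\setminus\{0\}$, multiplying the Lyapunov inequality by $\phi(x)/|\textup{grad}\{S(x)\}|$ preserves the sign and recasts it as
\begin{equation*}
\frac{\textup{grad}\{S(x)\}^{\rm T}}{|\textup{grad}\{S(x)\}|}\bigl[\phi(x)|\textup{grad}\{S(x)\}|f(x)\bigr]<0,
\end{equation*}
which says that the component of $\rho(x)f(x)$ along the outward unit normal $\textup{grad}\{S\}/|\textup{grad}\{S\}|$ to the level sets of $S$ is strictly negative. Integrating this pointwise inequality over the closed surface $\Gamma=\{S(x)=C\}$ and invoking the divergence theorem on $V$ converts the surface integral into $\int_V \textup{div}\{\rho(x)f(x)\}\,dV<0$, which is condition~1.

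For case 2, the chain rule identity $\textup{grad}\{S^{-1}\}=-S^{-2}\textup{grad}\{S\}$ yields $\textup{grad}\{S^{-1}(x)\}^{\rm T} f(x)>0$. Multiplying by $\phi^{-1}(x)/|\textup{grad}\{S^{-1}(x)\}|>0$ and running the same flux-to-divergence conversion on $\Gamma_{inv}=\{S^{-1}(x)=C\}$, $V_{inv}=\{S^{-1}(x)\geq C\}$ produces condition~2. The main technical obstacle is the one already present in Theorem \ref{Th1}: $V_{inv}$ accumulates to the origin where $S^{-1}$ and $|\textup{grad}\{S^{-1}\}|$ blow up, so applying the divergence theorem requires the integrability hypothesis explicitly built into each case, together with $|\textup{grad}\{S(x)\}|\neq 0$ on $D\setminus\{0\}$ to keep $\Gamma$ and $\Gamma_{inv}$ smooth.
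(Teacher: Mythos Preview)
Your proposal is correct and follows essentially the same route as the paper: both proofs start from the converse Lyapunov function $S$ supplied by Theorem~\ref{Th1}, multiply the Lyapunov inequality by the positive weight $\phi$ (respectively $\phi^{-1}$) so that the flux of $\rho f$ (respectively $\rho^{-1}f$) through the level surfaces keeps its sign, and then apply the divergence theorem exactly as in Theorem~\ref{Th1}. The paper's version is terser---it simply says ``the further proof is similar to the proof of Theorem~\ref{Th1}, but taking into account the flow of the vector field $\phi(x)|\textup{grad}\{S(x)\}|f(x)$''---while you have spelled out the intermediate steps, but the argument is the same.
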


\begin{proof}
Following the proof of Theorem \ref{Th1}, consider two cases.

1. If $\textup{grad} \{S(x)\}^{\rm T}f(x)<0$, then $\phi(x) \textup{grad} \{S(x)\}^{\rm T} f(x)<0 $. Therefore, the further proof is similar to the proof of Theorem \ref{Th1}, but tacking into account the flow of the vector field $\phi(x)|\textup{grad}\{S(x)\}|f(x)$ through the surface $\Gamma$.

2. If $\textup{grad}\{S(x)\}^{\rm T}f(x)<0$, then $\phi^{-1}(x) \textup{grad} \{S(x) \}^{\rm T} f(x)<0$. Therefore, the further proof is similar to the proof of Theorem \ref{Th1}, but taking into account the flow of the vector field 
$\phi^{-1}(x)|\textup{grad} \{S^{-1}(x)\}|f(x)$ through the surface $\Gamma_{inv}$. The corollary is proved.
\end{proof}

\begin{remark}
\label{rem2}
If the function $\rho(x)$ is chosen such that $\textup{div} \{\rho(x) f(x) \}$ and $\textup{div} \{\rho^{-1}(x)f(x)\}$ 
are integrable, as well as, $\textup{div} \{\rho(x)f(x)\}<0$ and $\textup{div} \{\rho^{-1}(x)f(x)\}>0$ for any $x \in D \setminus \{0\}$, 
then the corresponding conditions $ \int_{V} \textup{div}\{\rho(x)f(x)\} dV<0$ and $\int_{V_{inv}} \textup{div} \{\rho^{-1}(x) f(x) \}dV_{inv}>0$ in Corollary \ref{Th2} are satisfied. In \cite{Rantzer01} the integrability of $\textup{div} \{\rho^{-1}(x)f(x)\}$ and the condition $\textup{div} \{\rho^{-1}(x)f(x)\}>0$ are required only for convergence of almost all solutions of \eqref{eq1}. Thus, the results of \cite{Rantzer01} are special case in Corollary \ref{Th2}.
\end{remark}

Now let us formulate a sufficient condition of stability of \eqref{eq1}.

\begin{theorem}
\label{Th2a}
Let $\rho(x)$ be a positive definite continuously differentiable function in $D$. The equilibrium point $x=0$ of system \eqref{eq1} is stable (asymptotically stable) if at least one of the following conditions holds:

1) $\textup {div} \{\rho(x) f(x)\} \leq \rho(x) \textup{div} \{f (x)\}$ 
$(\textup{div} \{\rho(x) f(x)\} < \rho(x) \textup{div} \{f(x)\})$ for any $x \in D \setminus \{0\} $ and $\textup{div} \{\rho(x)f(x)\} \big|_{x=0}=0$;

2) $\textup{div} \{\rho^{-1}(x)f(x)\} \geq 0$ $(\textup{div} \{\rho^{-1}(x) f (x) \}>0) $ and $\textup{div} \{f(x)\} \leq 0$ for any $x \in D \setminus \{0\} $ and 
$\lim_{|x| \to 0} \big [\rho^2(x) \textup{div} \{\rho^{-1}(x)f(x)\} \big]=0$;

3) $\textup{div}\{\rho(x) f(x)\} \leq \beta (x) \rho^2 (x) \textup{div} \{\rho^{-1}(x) f(x)\}$ 
$(\textup{div} \{\rho(x)f(x)\} < \beta(x) \rho^2(x) \textup{div}\{\rho^{-1}(x)f(x)\})$, where 
$\beta(x) > 1$ and $\textup{div}\{f(x)\} \leq 0$ or only $\beta(x)=1$ for any $x \in D \setminus \{0\}$, as well as, 
$\textup{div} \{\rho(x)f(x)\} \big |_{x = 0}=0$ and $\lim_{|x| \to 0} \big [\rho(x) \textup{div} \{\rho^{-1}(x)f (x)\} \big]=0$;

4) $\textup{div}\{\rho(x) f(x)\} \leq 0$  and $\textup{div}\{\rho^{-1}(x) f(x)\} \geq 0$ $(\textup{div}\{\rho(x) f(x)\} < 0$  and $\textup{div}\{\rho^{-1}(x) f(x)\} > 0)$, as well as, 
$\textup{div}\{\rho(x) f(x)\} \big |_{x = 0}=0$ and $\lim_{|x| \to 0} \big [\rho(x) \textup{div} \{\rho^{-1}(x)f (x)\} \big]=0$.
\end{theorem}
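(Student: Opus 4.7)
The plan is to show that in each of the four cases the positive definite function $\rho(x)$ itself can serve as a Lyapunov function for \eqref{eq1}, so that the classical Lyapunov (asymptotic) stability theorem from \cite{Khalil09} applies directly. The key algebraic tool I would use is the product rule for divergence, which gives, wherever $\rho(x)>0$, the two identities
\begin{align*}
\textup{div}\{\rho(x)f(x)\} &= \rho(x)\textup{div}\{f(x)\} + \textup{grad}\{\rho(x)\}^{\rm T}f(x),\\
\rho^2(x)\textup{div}\{\rho^{-1}(x)f(x)\} &= \rho(x)\textup{div}\{f(x)\} - \textup{grad}\{\rho(x)\}^{\rm T}f(x).
\end{align*}
Forming appropriate linear combinations of these two identities will, in each of the four cases, reduce the hypothesis to the Lyapunov inequality $\dot\rho(x)=\textup{grad}\{\rho(x)\}^{\rm T}f(x)\le 0$, with strict inequality on $D\setminus\{0\}$ under the strict versions of the hypotheses.

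Concretely, in Case 1 the first identity rewrites the hypothesis directly as $\textup{grad}\{\rho\}^{\rm T}f\le 0$. In Case 4 subtracting the two identities gives $2\,\textup{grad}\{\rho\}^{\rm T}f=\textup{div}\{\rho f\}-\rho^2\textup{div}\{\rho^{-1}f\}\le 0$. In Case 2 the second identity together with $\textup{div}\{f\}\le 0$ bounds $\textup{grad}\{\rho\}^{\rm T}f$ above by $\rho\textup{div}\{f\}\le 0$. In Case 3 I would combine the two identities to obtain $(1+\beta)\textup{grad}\{\rho\}^{\rm T}f\le(\beta-1)\rho\textup{div}\{f\}$, whose right-hand side is nonpositive both when $\beta\equiv 1$ and when $\beta>1$ with $\textup{div}\{f\}\le 0$. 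Once $\dot\rho\le 0$ (respectively $<0$) on $D\setminus\{0\}$ is in hand, the positive definiteness and continuous differentiability of $\rho$ deliver Lyapunov (asymptotic) stability.

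The one delicate point I expect to have to work around is the singularity of $\rho^{-1}$ at the origin: since $\rho$ is positive definite, $\rho(0)=0$, and $\textup{div}\{\rho^{-1}f\}$ is \emph{a priori} only defined on $D\setminus\{0\}$, so the identities above are strictly valid only away from the equilibrium. The boundary hypotheses $\textup{div}\{\rho(x)f(x)\}|_{x=0}=0$ and $\lim_{|x|\to 0}[\rho^2(x)\textup{div}\{\rho^{-1}(x)f(x)\}]=0$ are precisely what is needed here: combined with the two identities they force $\textup{grad}\{\rho(x)\}^{\rm T}f(x)\to 0$ as $x\to 0$, so that the Lyapunov derivative extends continuously to the full domain $D$ with the correct value at the equilibrium and the standard Lyapunov theorems apply cleanly in every case.
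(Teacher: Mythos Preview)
Your proposal is correct and is essentially identical to the paper's own proof: the paper also uses $\rho$ as a Lyapunov function, derives exactly the same two product-rule identities, and in each case combines them (in Case~3 via the very same $(1+\beta)\,\textup{grad}\{\rho\}^{\rm T}f=\textup{div}\{\rho f\}-\beta\rho^2\textup{div}\{\rho^{-1}f\}+(\beta-1)\rho\,\textup{div}\{f\}$ relation you obtained) to conclude $\textup{grad}\{\rho\}^{\rm T}f\le 0$ and then invokes Lyapunov's theorem. Your handling of the boundary conditions at $x=0$ also mirrors the paper's.
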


\begin{proof}
Consider the proof for each case separately. The proof of asymptotic stability is omitted because it is similar to the proof of stability, but taking into account the sign of a strict inequality.

1. From the relation 
$\textup{div} \{\rho(x) f(x)\} = \textup{grad}\{\rho(x)\}^{\rm T} f(x)+\textup{div}\{f(x)\} \rho(x)$ 
implies that if 
$\textup{div}\{\rho(x)f(x)\} \leq \textup{div}\{f(x)\}\rho(x)$, 
then $\textup{grad}\{\rho(x)\}f(x) \leq 0 $ 
in the domain $D \setminus \{0\}$. 
Consider the condition $\rho(0)=0$. 
If $\textup{div}\{\rho(x)f(x)\} \big |_{x=0}=0$, 
then $\textup{grad}\{\rho (x)\}f(x)\big |_{x=0}=0$. 
Therefore, according to Lyapunov theorem (\cite{Khalil09}), system \eqref{eq1} is stable.

2. From the expression 
$\textup{div}\{\rho^{-1}(x)f(x)\}=\textup{grad}\{\rho^{-1} (x)\}^{ \rm T} f(x)+\textup{div}\{f(x)\}\rho^{-1}(x)$ it 
follows that 
$\textup{grad}\{\rho(x)\}^{\rm T} f(x)=\rho(x)\textup{div}\{f(x)\}-\rho^{2}(x) \textup{div}\{\rho^{-1}(x)f(x)\}$. 
If $\textup{div}\{\rho^{-1}(x)f(x)\} \geq 0$ and $\textup{div} \{f(x)\} \leq 0$, 
then $\textup{grad}\{\rho(x)\}^{\rm T} f(x) \leq 0$ in $D \setminus \{0\}$. If $\lim_{|x| \to 0} \big [\rho^2 (x) \textup{div} \{\rho^{-1}(x)f(x)\} \big]=0$, then $\lim_{|x| \to 0} \big[\textup{grad}\{\rho(x)\}f(x)\big]=0$. Therefore, system \eqref{eq1} is stable.

3. Condition 3 is a combination of conditions 1 and 2. Summing $\beta(x) \textup {grad}\{\rho(x)\}^{\rm T} f(x)=\beta(x) \rho(x)\textup{div} \{f(x)\}-\beta(x) \rho^{2}(x) \textup {div} \{\rho^{-1}(x)f(x)\}$ 
and 
$\textup{grad}\{\rho(x)\}^{\rm T}f(x)=
\textup{div}\{\rho(x)f(x)\}-\textup{div}\{f(x)\}\rho(x)$, 
we get $(1+\beta(x)) \textup{grad}\{\rho(x)\}^{\rm T}f(x)=\textup{div}\{\rho(x)f (x)\}-\beta(x)\rho^{2}(x) \textup{div}\{\rho^{-1}(x)f(x)\}+(\beta(x)-1)\rho(x)\textup{div}\{f (x)\}$. 
If $\textup{div}\{\rho(x)f(x)\} \leq \beta(x) \rho^2(x) \textup{div}\{\rho^{-1} (x)f(x)\}$ 
for $\beta(x)=1 $ or $\beta(x)>1$ and $\textup{div} \{f (x)\} \leq 0$, then 
$\textup{grad}\{\rho(x)\}^{\rm T}f(x) \leq 0$ in the region $D \setminus \{0\}$. 
If $\textup {div}\{\rho(x)f(x)\} \big|_{x=0}=0$ and 
$\lim_{|x| \to 0} \big [\rho^2(x) \textup{div} \{\rho^{-1}(x)f(x)\} \big]=0$, 
then 
$\lim_{|x| \to 0} \big [\textup{grad} \{\rho(x)\}f(x)\big]=0$. Therefore, system \eqref{eq1} is stable.

4. From the relation 
$2\textup{grad}\{\rho (x)\}^{ \rm T} f(x)=-\rho^2(x)\textup{div}\{\rho^{-1}(x)f(x)\}+\textup{div}\{f(x)\}\rho(x)+\textup{grad}\{\rho (x)\}^{ \rm T} f(x)$ it follows that $2\textup{grad}\{\rho (x)\}^{ \rm T} f(x)=-\rho^2(x)\textup{div}\{\rho^{-1}(x)f(x)\}+\textup{div}\{\rho(x)f(x)\}$.
If $\textup{div}\{\rho(x)f(x)\} \leq 0$ and $\textup{div}\{\rho^{-1}(x)f(x)\} \geq 0$, then $\textup{grad}\{\rho (x)\}^{ \rm T} f(x) \leq 0$ in $D \setminus \{0\}$. Therefore, system \eqref{eq1} is stable.
Theorem \ref{Th2a} is proved.
\end{proof}

It is noted in Introduction that the result of \cite{Shestakov78,Jukov99} is applicable only to second-order systems. Next, we consider an illustration of the proposed results for third-order systems and compare the results with ones from \cite{Rantzer01}.

\textit{Example 1}.
Consider the system

\begin{equation}
\label{eq6}
\begin{array}{l} 
\dot{x}_1=x_2-2x_1 x_3^2,
\\
\dot{x}_2=-x_1-2x_2x_3^2,
\\
\dot{x}_3=-2x_3^3,
\end{array}
\end{equation} 
which has an equilibrium point $(0,0,0)$.

Choose $\rho(x)=|x|^{2\alpha}$, where $\alpha$  is a positive integer.
First, verify the conditions of Corollary \ref{Th2}. Since $\textup{div}\{\rho(x)f(x)\}=-|x|^{2\alpha}(4\alpha+10)x_3^2<0$ for any $\alpha$ and $x_3 \neq 0$, as well as, $\textup{div}\{\rho^{-1}(x)f (x)\}=(4\alpha-10)x_3^2|x|^{-2\alpha}>0$ for $\alpha \geq 3$ and $x_3 \neq 0$, then the conditions of Corollary \ref{Th2} are satisfied. 
Since the function $\textup{div}\{\rho^{-1}(x)f(x)\} $ is integrable in $\{x \in \mathbb R^n: |x| \geq 1 \}$, then the conditions of Theorem 1 in \cite{Rantzer01} (convergence of almost all solutions of \eqref{eq6}) are satisfied too.

Now let us verify the conditions of Theorem \ref{Th2a}. The relation 
$\textup{div}\{\rho(x)f(x)\}-\rho(x)\textup{div}\{f(x)\}=-4\alpha x_3^2|x|^{2\alpha}<0$
 holds for any $\alpha$ and $x_3 \neq 0$.
In turn, $\textup{div}\{f(x)\}=-10x_3^2<0$ and the function $\textup{div}\{\rho^{-1}(x)f(x)\}>0$ for any $\alpha \geq 3$ and $x_3 \neq 0$ (this conclusion can also be obtained using Proposition 2 in \cite{Rantzer01}). 
Let $\beta(x)=\beta \geq 1$. Then $\textup{div}\{\rho(x)f(x)\}-\beta\rho^2(x) \textup{div}\{\rho^{-1}(x)f(x)\}=-(4\alpha+10+4\beta\alpha-10\beta)x_3^2|x|^{2\alpha}<0$ for $\alpha>\frac{5(\beta-1)}{2(\beta+1)}$ and $x_3 \neq 0$. 
The conditions $\textup{div}\{\rho(x) f(x)\} < 0$  and $\textup{div}\{\rho^{-1}(x) f(x)\} > 0$ hold for $\alpha \geq 3$ and $x_3 \neq 0$.
All four cases gave the same results. 
Therefore, system \eqref{eq6} is asymptotically stable with any initial conditions when $x_3(0) \neq 0$. If the initial conditions contain $x_3 (0)=0$, then system 
\eqref{eq6} is stable. The phase trajectories of \eqref{eq6} are shown in Fig.~\ref{Fig5}, where the cycle is obtained for the initial condition with $x_3=0$, the spiral is obtained for $x_3 \neq 0$.

Thus, Corollary \ref{Th2} and Theorem \ref{Th2a}, as well as the results of \cite{Rantzer01}, give positive answers about the stability of \eqref{eq6}. Additionally, the conditions of Theorem \ref{Th2a} allow establishing when system \eqref{eq6} is stable and when it is asymptotically stable.

\begin{figure}[h!]
\center{\includegraphics[width=0.7\linewidth]{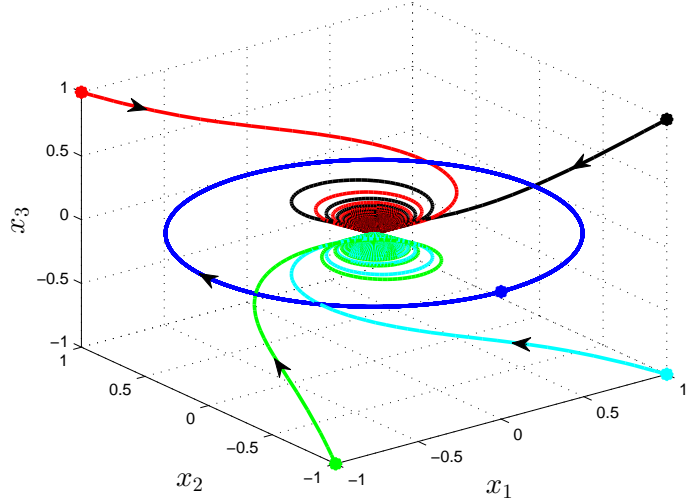}}
\caption{Phase trajectories of system \eqref{eq6}.}
\label{Fig5}
\end{figure}

\textit{Example 2}.
Consider the system

\begin{equation}
\label{eq04aa}
\begin{array}{l} 
\dot{x}_1=-x_1+x_1^2-x_2^2-x_3^2,
\\
\dot{x}_2=-x_2+2x_1 x_2,
\\
\dot{x}_3=-x_3+2x_1 x_3,
\end{array}
\end{equation} 
which has two equilibrium points $(0,0,0)$ and $(1,0,0)$. 
All trajectories of the system converge to the point $(0,0,0)$, except those that start on the semi-axis $x_1 \geq 1$, $x_2=0 $ and $x_3=0$ (see Fig.~\ref{Ust_Rantz_3rd}). 
Choose $\rho(x)=|x|^{2\alpha}$, $\alpha$ is a positive integer. Then $\textup{div}\{\rho^{-1}(x)f(x)\}=|x|^{-2 \alpha} [2\alpha-3+2x_1(3-\alpha)]>0 $ for $\alpha=3$. The function $\textup{div} \{f(x)\}=-3+6x_1$ does not satisfy the condition $\textup{div}\{f(x)\} \leq 0$ for $x_1>0.5$. 
Relations $\textup{div}\{\rho(x)f(x)\} \leq \rho(x) \textup{div}\{f(x)\}$, $\textup{div}\{\rho(x)f(x)\} \leq \beta(x) \rho^2(x) \textup{div}\{\rho^{-1}(x)f(x)\} $ and  $\textup{div}\{\rho(x)f(x)\} \leq 0$ are not satisfied too. 
As a result, the conditions of Corollary \ref{Th2} (and the conditions of Theorem 1 in \cite{Rantzer01}) are fulfilled in this example, but the conditions of Theorem \ref{Th2a} (and the conditions of Proposition 2 in \cite{Rantzer01} are not satisfied.

\begin{figure}[h!]
\center{\includegraphics[width=0.7\linewidth]{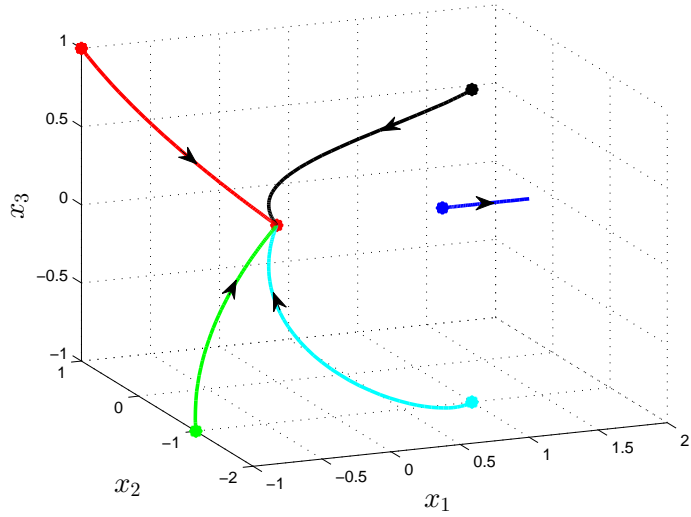}}
\caption{Phase trajectories of system \eqref{eq04aa} with two equilibrium points.}
\label{Ust_Rantz_3rd}
\end{figure}

\textit{Example 3.}
Consider the system

\begin{equation}
\label{eq6aa}
\begin{array}{l} 
\dot{x}_1=-4 x_1 x_2^2 - x_1^3,
\\
\dot{x}_2=4x_1^2 x_2-x_2^3-8x_2 x_3^2,
\\
\dot{x}_3=-x_3^3+8x_2^2x_3
\end{array}
\end{equation}
with equilibrium point $(0,0,0)$. The phase trajectories of \eqref{eq6aa} are shown in Fig.~\ref{Fig5aa} for various initial conditions.

\begin{figure}[h!]
\center{\includegraphics[width=0.7\linewidth]{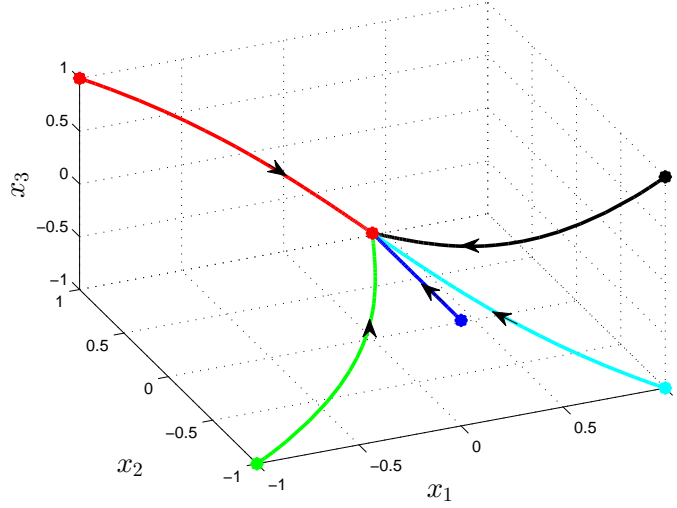}}
\caption{Phase trajectories of system \eqref{eq6aa}.}
\label{Fig5aa}
\end{figure}

Choose $\rho(x)=|x|^{2\alpha}$, $\alpha$  is a positive integer and verify the conditions of Corollary \ref{Th2}.
Considering $\textup{div}\{\rho(x)f(x)\}=|x|^{2\alpha-2}[(-2\alpha+1) x_1^4+(-2\alpha+1)x_2^4+(-2\alpha-11)x_3^4+2x_1^2x_2^2-10x_1^2 x_3^2-10x_2^2 x_3^2]$,
we get $\int_V \textup{div} \{\rho(x)f(x)\}dV<0$ for any $C$ and $\alpha$.
For $\textup {div}\{\rho^{-1}(x)f(x)\}=|x|^{-2\alpha-2}[(2\alpha+1)x_1^4+(2\alpha+1)x_2^4+(2\alpha-11)x_3^4+2x_1^2x_2^2-10x_1^2x_3^2-10x_2^2x_3^2$
the condition $\int_{V_{inv}}\textup{div}\{\rho^{-1}(x)f(x)\}dV_{inv}>0$ holds for any $C$ and $\alpha \geq 3$.
Consequently, the conditions of Corollary \ref{Th2} are satisfied (the conditions of Theorem 1 in \cite{Rantzer01} are satisfied only for $\alpha \geq 8$).

Verify the conditions of Theorem \ref{Th2a}. 
The relation
$\textup{div}\{\rho(x)f (x)\}-\rho(x)\textup{div}\{f(x)\}=-2\alpha|x|^{2\alpha-2}(x_1^4+x_2^4+ x_3^4)<0$
holds for any $\alpha$ and $x \neq 0$.
The function $\textup {div} \{f(x)\}=x_1^2+x_2^2-11x_3^2$ is not positive definite. Thus, Proposition 2 in \cite{Rantzer01} and the second case of Theorem \ref{Th2a} cannot be satisfied.
Condition $\textup{div}\{\rho(x)f(x)\}-\beta\rho^2(x)\textup{div}\{\rho^{-1}(x)f(x)\}<0$ in Theorem \ref{Th2a} holds for $\beta=1$ and $x \neq 0$. The relations $\textup{div}\{\rho(x)f(x)\} \leq 0$ and $\textup{div}\{\rho^{-1}(x)f(x)\} \geq 0$ are satisfied for for $\alpha \geq 6$ and $x \neq 0$.

As a result, the conditions of Corollary \ref{Th2} and Theorem \ref{Th2a} are satisfied for system \eqref{eq6aa}. Thus, $(0,0,0)$ is an asymptotically stable equilibrium point. According to \cite{Rantzer01}, we can only conclude that almost all solutions of \eqref{eq6aa} converge to $(0,0,0)$, because the conditions of Proposition 2 in \cite{Rantzer01} are not satisfied and only the conditions of Theorem 1 in \cite{Rantzer01} hold.

\subsection{Stability of linear systems} \label{Sec2b}

\begin{theorem}
\label{Th2a_lin}
Given $\alpha>0$. The linear system $\dot{x}=Ax$, $x \in \mathbb R^n$ is stable if at least one of the following conditions holds:

1) $A^{\rm T}P+PA-\frac{1}{\alpha} \frac{\beta-1}{\beta+1} trace(A)P<0$ for $\beta = 1$ or for $\beta > 1$ and $trace(A) \leq 0$;

2) $A^{\rm T}P+PA-\frac{1}{\alpha}trace(A)P<0$ and $A^{\rm T}P+PA+\frac{1}{\alpha}trace(A)P<0$.
\end{theorem}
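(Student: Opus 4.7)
The plan is to reduce Theorem \ref{Th2a_lin} to the nonlinear sufficient conditions of Theorem \ref{Th2a} by plugging in the quadratic ansatz $\rho(x) = (x^{\rm T} P x)^{\alpha}$ with $P>0$ positive definite. For the linear field $f(x)=Ax$ we have $\textup{div}\{f\} = \textup{trace}(A)$ and $\textup{grad}\{\rho\} = 2\alpha(x^{\rm T}Px)^{\alpha-1}Px$, so the product rule $\textup{div}\{\rho f\} = \textup{grad}\{\rho\}^{\rm T} f + \rho\,\textup{div}\{f\}$ gives the two key identities
\begin{align*}
\textup{div}\{\rho(x) f(x)\} &= \alpha (x^{\rm T} P x)^{\alpha-1}\, x^{\rm T}(A^{\rm T} P + PA)x + (x^{\rm T} P x)^{\alpha}\, \textup{trace}(A),\\
\rho^{2}(x)\,\textup{div}\{\rho^{-1}(x) f(x)\} &= -\alpha (x^{\rm T} P x)^{\alpha-1}\, x^{\rm T}(A^{\rm T} P + PA)x + (x^{\rm T} P x)^{\alpha}\, \textup{trace}(A).
\end{align*}
Both expressions scale like $|x|^{2\alpha}$ near the origin, so the boundary conditions $\textup{div}\{\rho f\}\big|_{x=0}=0$ and $\lim_{|x|\to 0}[\rho^{2}\textup{div}\{\rho^{-1} f\}]=0$ required by Theorem \ref{Th2a} are automatically satisfied, and $\rho$ is positive definite.

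For part (1) I would apply condition 3 of Theorem \ref{Th2a} with the constant choice $\beta(x)\equiv \beta$. Substituting the two identities above into $\textup{div}\{\rho f\} < \beta\,\rho^{2}\textup{div}\{\rho^{-1} f\}$, canceling the positive factor $(x^{\rm T} P x)^{\alpha-1}$, and collecting, the inequality reduces to the quadratic form
\[
x^{\rm T}\Bigl[A^{\rm T} P + PA - \tfrac{1}{\alpha}\tfrac{\beta-1}{\beta+1}\,\textup{trace}(A)\,P\Bigr] x < 0,
\]
which holds for all $x\neq 0$ precisely when the matrix inequality of (1) is satisfied. The side conditions on $\beta$ (either $\beta=1$, or $\beta>1$ combined with $\textup{trace}(A)\le 0$) are taken over verbatim from Theorem \ref{Th2a}(3).

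For part (2) I would apply condition 4 of Theorem \ref{Th2a}. After dividing by the positive factor $(x^{\rm T} P x)^{\alpha-1}$, the strict inequality $\textup{div}\{\rho f\}<0$ becomes $x^{\rm T}[\alpha(A^{\rm T} P + PA) + \textup{trace}(A)\,P]x<0$, equivalent to $A^{\rm T} P + PA + \tfrac{1}{\alpha}\textup{trace}(A)P<0$; testing $\textup{div}\{\rho^{-1} f\}>0$ through the positive multiplier $\rho^{2}$ yields, analogously, $A^{\rm T} P + PA - \tfrac{1}{\alpha}\textup{trace}(A)P<0$. These are exactly the two inequalities of (2), so Theorem \ref{Th2a} closes the argument.

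The proof is essentially algebraic bookkeeping once the quadratic density is fixed: there is no conceptual obstacle, only the need to track the sign of the $(x^{\rm T} P x)^{\alpha-1}$ multiplier (positive for $x\neq 0$) and the side hypotheses carried over from Theorem \ref{Th2a}. The only minor technicality is that $\rho$ is genuinely $C^{1}$ at the origin only for $\alpha>1/2$; this does not restrict the resulting LMIs, since $\alpha$ is a free positive parameter that may always be taken large enough.
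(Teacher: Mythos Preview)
Your proposal is correct and follows essentially the same route as the paper: choose $\rho(x)=(x^{\rm T}Px)^{\alpha}$, compute $\textup{div}\{\rho f\}$ and $\rho^{2}\textup{div}\{\rho^{-1}f\}$ for $f(x)=Ax$, and then feed these into cases~3 and~4 of Theorem~\ref{Th2a} to obtain parts~(1) and~(2) respectively. Your write-up is in fact slightly more careful than the paper's, since you explicitly verify the boundary conditions at $x=0$ and note the $C^{1}$ issue for small $\alpha$.
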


\begin{proof}
Let $\rho(x)=(x^{\rm T}Px)^{\alpha}$. 
According to Theorem \ref{Th2a} (case 3), the relation $\textup{div}\{\rho(x)f(x)\}-\beta \rho^2(x) \textup{div}\{\rho^{-1}(x)f(x)\} = \alpha(1+\beta)(x^{\rm T}Px)^{\alpha-1}x^{\rm T}[A^{\rm T}P+PA-\frac{1}{\alpha} \frac{\beta-1}{\beta+1} trace(A)P]x < 0$ is satisfied, if $A^{\rm T}P+PA-\frac{1}{\alpha} \frac{\beta-1}{\beta+1} trace(A)P<0$ holds for $\beta = 1$ or for $\beta > 1$ and $trace(A) \leq 0$. 

Considering Theorem \ref{Th2a} (case 4), the relations $\textup{div}\{\rho(x)f(x)\} = \alpha(x^{\rm T}Px)^{\alpha-1}x^{\rm T}[A^{\rm T}P+PA+\frac{1}{\alpha} trace(A)P]x < 0$ and $\textup{div}\{\rho^{-1}(x)f(x)\}  = -\alpha(x^{\rm T}Px)^{-\alpha-1}x^{\rm T}[A^{\rm T}P+PA-\frac{1}{\alpha}  trace(A)P]x > 0$ are satisfied, if $A^{\rm T}P+PA+trace(A)P<0$ and $A^{\rm T}P+PA-trace(A)P<0$ simultaneously hold. Theorem \ref{Th2a_lin} is proved.
\end{proof}

As a result, the matrix inequality in Theorem \ref{Th2a_lin} (case 1) simultaneously includes Lyapunov inequality (for $\beta = 1$) and inequality from \cite{Rantzer01} (for $\beta > 1$). In Theorem \ref{Th2a_lin} (case 2) the matrix inequality $A^{\rm T}P+PA-trace(A)P<0$ from \cite{Rantzer01} is complemented by new inequality $A^{\rm T}P+PA+trace(A)P>0$. The sum of the inequalities from Theorem \ref{Th2a_lin} (case 2) gives Lyapunov inequality.

\subsection{Instability conditions} \label{Sec2c}

The instability condition in the form $\textup{div}\{f(x)\}>0$ in $x \in D \setminus \{0\}$ for system \eqref{eq1} is considered in \cite{Zaremba54,Jukov78}. 
In \cite{Jukov79} the new condition $\textup{div}\{\rho(x) f(x)\}>0$ in $x \in D \setminus \{0\}$ allows one to extended a class of  investigated systems by introducing the positive definite continuously differentiable function $\rho(x)$. 
%However, the results of \cite{Zaremba54,Jukov78,Jukov79} should be valid in the set $D \setminus \{0\}$. 
Differently from \cite{Zaremba54,Jukov78}, we propose new results that allows one to further extend the class of investigated systems by using a continuously differentiable function $\rho(x)$, that can be not a positive definite, and consider only a part of the set $D \setminus \{0\}$.

\begin{theorem}
\label{Th2a_ins}
Let $\rho(x): D \to \mathbb R$ be a continuously differentiable function such that $\rho(0)=0$ and $\rho(x_0)>0$ for some $x_0$ with arbitrary small $\|x_0\|$. Define a sets $U=\{x \in B: \rho(x)>0 \}$ and $B=\{x \in \mathbb R^n: \|x\| \leq r, r>0 \}$ such that $B \subseteq D$. The equilibrium point $x=0$  of system \eqref{eq1} is unstable, if at least one of the following conditions holds  for any $x \in U$:

1) $\textup {div} \{\rho(x) f(x)\} > \rho(x) \textup{div} \{f (x)\}$;

2) $\textup{div} \{\rho^{-1}(x)f(x)\} < 0$ and $\textup{div} \{f(x)\} \geq 0$;

3) $\textup{div}\{\rho(x) f(x)\} > \beta (x) \rho^2 (x) \textup{div} \{\rho^{-1}(x) f(x)\}$, where 
$\beta(x) > 1$ and $\textup{div}\{f(x)\} \geq 0$ or only $\beta(x)=1$;

4) $\textup{div}\{\rho(x) f(x)\}> 0$  and $\textup{div}\{\rho^{-1}(x) f(x)\} < 0$.
\end{theorem}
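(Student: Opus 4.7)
The approach will mirror the proof of Theorem~\ref{Th2a} but substitute Chetaev's instability theorem for Lyapunov's stability theorem. For each of the four cases, the plan is to deduce that $\textup{grad}\{\rho(x)\}^{\rm T} f(x)>0$ on the set $U$, and then invoke Chetaev directly: because $\rho(0)=0$, because $\rho$ takes positive values arbitrarily close to the origin, because the ball $B$ is bounded, and because $\rho$ is strictly increasing along any trajectory contained in $U$, the origin cannot be stable.

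The algebraic identities needed are exactly the ones already derived in the proof of Theorem~\ref{Th2a}, simply read with the opposite sign. From $\textup{div}\{\rho f\}=\textup{grad}\{\rho\}^{\rm T}f+\rho\,\textup{div}\{f\}$, case~1 gives $\textup{grad}\{\rho\}^{\rm T}f>0$ on $U$ immediately. For case~2 the identity $\textup{grad}\{\rho\}^{\rm T}f=\rho\,\textup{div}\{f\}-\rho^{2}\textup{div}\{\rho^{-1}f\}$ shows that on $U$, where $\rho>0$ (so that $\rho^{-1}$ is well defined), the hypotheses $\textup{div}\{\rho^{-1}f\}<0$ and $\textup{div}\{f\}\geq 0$ force positivity. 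Case~3 follows by summing the two identities weighted by $\beta$ and $1$, giving $(1+\beta)\textup{grad}\{\rho\}^{\rm T}f=\textup{div}\{\rho f\}-\beta\rho^{2}\textup{div}\{\rho^{-1}f\}+(\beta-1)\rho\,\textup{div}\{f\}$; either $\beta=1$ eliminates the last term, or $\beta>1$ together with $\textup{div}\{f\}\geq 0$ keeps it nonnegative, so positivity of the left-hand side on $U$ follows from the assumed inequality. Case~4 uses the identity $2\,\textup{grad}\{\rho\}^{\rm T}f=\textup{div}\{\rho f\}-\rho^{2}\textup{div}\{\rho^{-1}f\}$ and the two sign hypotheses directly.

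Once $\textup{grad}\{\rho\}^{\rm T}f>0$ is established on $U$, the Chetaev machinery finishes the argument. Pick a trajectory $x(t)$ of \eqref{eq1} starting at a point $x_0\in U$ with $\|x_0\|$ arbitrarily small; while $x(t)$ remains in $U$, the value $\rho(x(t))$ strictly increases, so $x(t)$ cannot tend to the origin (where $\rho=0$) nor cross the portion of $\partial U$ where $\rho=0$. Therefore $x(t)$ must eventually exit $B$ through the sphere $\|x\|=r$, which contradicts Lyapunov stability of the origin.

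The main obstacle I anticipate is bookkeeping on the domain of validity rather than any deep new idea: cases~2 and~3 manipulate $\rho^{-1}$, so the derivations are only legitimate where $\rho>0$, and this is exactly why the theorem restricts attention to $U$ rather than to all of $B\setminus\{0\}$. A minor subtlety is confirming that the origin lies in the closure of $U$; this is guaranteed by the assumption that $\rho(x_{0})>0$ for some $x_{0}$ arbitrarily close to $0$, which is precisely the hypothesis Chetaev requires, so no extra work is needed.
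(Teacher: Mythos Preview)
Your proposal is correct and follows essentially the same route as the paper: reduce each case to the inequality $\textup{grad}\{\rho(x)\}^{\rm T}f(x)>0$ on $U$ via the same divergence identities used in Theorem~\ref{Th2a}, and then invoke Chetaev's instability theorem. Your write-up is in fact slightly more careful than the paper's, since you spell out the Chetaev exit argument and flag that $\rho^{-1}$ is only meaningful on $U$.
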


\begin{proof}
According to Chetaev theorem \cite{Khalil09}, if $\textup{grad}\{\rho(x)\}f(x) > 0$  for $ x \in U$, then the equilibrium point $x=0$ is unstable. Also, the proof of Theorem \ref{Th2a_ins} is based on the proof of Theorem \ref{Th2a}.

From the relation 
$\textup{div} \{\rho(x) f(x)\} = \textup{grad}\{\rho(x)\}^{\rm T} f(x)+\textup{div}\{f(x)\} \rho(x)$ 
implies that if 
$\textup{div}\{\rho(x)f(x)\} > \textup{div}\{f(x)\}\rho(x)$  for $ x \in U$ (see case 1), 
then $\textup{grad}\{\rho(x)\}f(x) > 0$ 
for $ x \in U$. 
%Thus, system \eqref{eq1} is unstable.

Consider
$\textup{grad}\{\rho(x)\}^{\rm T} f(x)=\rho(x)\textup{div}\{f(x)\}-\rho^{2}(x) \textup{div}\{\rho^{-1}(x)f(x)\}$. 
If $\textup{div}\{\rho^{-1}(x)f(x)\} < 0$ and $\textup{div} \{f(x)\} > 0$ (see case 2), 
then $\textup{grad}\{\rho(x)\}^{\rm T} f(x) > 0$ for $ x \in U$. 
%Thus, system \eqref{eq1} is stable.

Consider $(1+\beta(x)) \textup{grad}\{\rho(x)\}^{\rm T}f(x)=\textup{div}\{\rho(x)f (x)\}-\beta(x)\rho^{2}(x) \textup{div}\{\rho^{-1}(x)f(x)\}+(\beta(x)-1)\rho(x)\textup{div}\{f (x)\}$. 
If $\textup{div}\{\rho(x)f(x)\} > \beta(x) \rho^2(x) \textup{div}\{\rho^{-1} (x)f(x)\}$ 
for $\beta(x)=1 $ or $\beta(x)>1$ and $\textup{div} \{f (x)\} > 0$   for $ x \in U$ (see case 3), then 
$\textup{grad}\{\rho(x)\}^{\rm T}f(x) > 0$  for $ x \in U$. 
%Therefore, system \eqref{eq1} is unstable.

Consider $2\textup{grad}\{\rho (x)\}^{ \rm T} f(x)=-\rho^2(x)\textup{div}\{\rho^{-1}(x)f(x)\}+\textup{div}\{\rho(x)f(x)\}$.
If $\textup{div}\{\rho(x)f(x)\} > 0$ and $\textup{div}\{\rho^{-1}(x)f(x)\} < 0$   for $ x \in U$ (see case 4), then $\textup{grad}\{\rho (x)\}^{ \rm T} f(x) > 0$  for $ x \in U$. 
%Thus, system \eqref{eq1} is unstable.
Theorem \ref{Th2a} is proved.
\end{proof}

\textit{Example 4.}  Consider the system

\begin{equation}
\label{eq_Ex_4}
\begin{array}{l} 
\dot{x}_1=x_1+g_1(x),
\\
\dot{x}_2=-x_2+g_2(x),
\end{array}
\end{equation}
where $|g_1(x)| \leq k_1 \|x\|^2$ and $|g_2(x)| \leq k_1 \|x\|^2$, $k_1>0$, see  \cite{Khalil09}. Let $\rho(x)=0.5(x_1^2-x_2^2) > 0$. Therefore, $(0,0)$ is the equilibrium point. Differently from \cite{Zaremba54,Jukov78,Jukov79}, the chosen function $\rho(x)$ is not positive definite.

Consider case 1 of Theorem \ref{Th2a_ins}. The condition $\textup {div} \{\rho(x) f(x)\} - \rho(x) \textup{div} \{f (x)\} \geq \|x\|^2(1-2k_1\|x\|)>0$ holds for $r<\frac{1}{2k_1}$.

Consider case 2 of Theorem \ref{Th2a_ins}. The relation $\textup {div} \{\rho^{-1}(x) f(x)\} \geq 
\frac{2}{(x_1^2-x_2^2)^2} \Big[ -\|x\|^2(1-2k_1\|x\|)+0.5(x_1^2-x_2^2)\left(\frac{\partial g_1}{\partial x_1}+\frac{\partial g_2}{\partial x_2}\right)\Big] \geq 
-\frac{2}{(x_1^2-x_2^2)^2} \|x\|^2 \left(1-2k_1\|x\|+0.5k_2\|x\| \right)$ holds for $0 \leq \frac{\partial g_1}{\partial x_1}+\frac{\partial g_2}{\partial x_2} \leq k_2 \|x\|$, $k_2>4k_1$, and $r<\frac{2}{k_2-4k_1}$. Thus, differently from case 1, now we need the additional restriction on the derivatives $\frac{\partial g_1}{\partial x_1}$ and $\frac{\partial g_2}{\partial x_2}$.

In case 3 of Theorem \ref{Th2a_ins} the condition $\textup{div}\{\rho(x) f(x)\} > \beta (x) \rho^2 (x) \textup{div} \{\rho^{-1}(x) f(x)\}$ holds only for $\beta=1$ and the same conditions as in case 2.

Consider case 4 of Theorem \ref{Th2a_ins}. The relations $\textup{div}\{\rho(x) f(x)\}> 0 \geq 
\|x\|^2(1-2k_1\|x\|-0.5k_2\|x\|) > 0$ and $\textup {div} \{\rho^{-1}(x) f(x)\} <0$ hold for the same conditions as in case 2. 
Thus, all four cases show that the point $(0,0)$ is unstable.

\subsection{Extension of Bendixson and Bendixson-Dulac theorems for $n$th dimensional systems} \label{Sec2d}

The lack of periodic solutions in a simply connected domain in $\mathbb R^2$ can be established by  Bendixson and Bendixson-Dulac theorems (\cite{Guckenheimer83}). The next two theorems are extensions of Bendixson and Bendixson-Dulac theorems to the $n$th dimensional systems.
%, if there exists a continuously differentiable function $\rho(x)$ such that $\textup{div} \{\rho(x)f(x)\}\}$ has the same sign almost everywhere in a simply connected region of the plane, then system \eqref{eq1}, $x \in \mathbb R^2$ has no nonconstant periodic solutions lying entirely within the region.
%Let us extend Bendixson-Dulac theorem to the $n$th dimensional system \eqref{eq1}  by using results of Theorems \ref{Th2a} and \ref{Th2a_ins}.

%\begin{definition}
%A surface of dimension $m \leq n-1$, belonging to some domain $S \subseteq \mathbb R^n$, is called $S^m$-closed if it is homeomorphic to the sphere of dimension $m$.
%\end{definition}

\begin{theorem}
\label{Th4}
%Let $\mathcal{S_C} \subseteq \mathbb R^n$ is a simply connected domain. If $\textup{div}\{f(x)\}$ does not change the sign in $\mathcal{S_C}$, then system \eqref{eq1} has no invariant closed subset with a positive measure in $\mathcal{S_C}$.
Let $D \subseteq \mathbb R^n$ is a simply connected domain. If $\textup{div}\{f(x)\}$ does not change the sign for all $x \in D$ (except possibly in a set of measure $0$), then system \eqref{eq1} has no invariant closed subset with a positive measure in $D$.
\end{theorem}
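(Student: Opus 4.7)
The plan is to argue by contradiction using Liouville's formula for the evolution of Lebesgue measure under the flow $\phi_t$ of~\eqref{eq1}. The key idea is that invariance of a closed set $M\subseteq D$ forces $\mathrm{vol}(\phi_t(M))$ to be constant in $t$, while Liouville's identity rewrites its $t$-derivative as $\int_M \textup{div}\{f(x)\}\,dx$; a one-signed divergence on a positive-measure set then produces an immediate contradiction. This is the direct $n$-dimensional analogue of the classical Bendixson proof via Green's theorem in $\mathbb R^2$.

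First I would suppose, for contradiction, that there exists a closed invariant set $M\subseteq D$ with $\mathrm{vol}(M)>0$, taking \emph{invariant} to mean $\phi_t(M)=M$ for all $t$. Since $f\in C^1(D)$, the flow is a local diffeomorphism on its domain, so the change-of-variables formula $\mathrm{vol}(\phi_t(M))=\int_M |\det D\phi_t(x)|\,dx$ applies. Differentiating at $t=0$ and using the classical identity $\frac{d}{dt}\det D\phi_t(x)\big|_{t=0}=\textup{div}\{f(x)\}$ yields
\begin{equation*}
\frac{d}{dt}\mathrm{vol}(\phi_t(M))\Big|_{t=0}=\int_{M}\textup{div}\{f(x)\}\,dx.
\end{equation*}
Invariance forces the left-hand side to vanish, hence $\int_M \textup{div}\{f(x)\}\,dx=0$.

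Since $\textup{div}\{f\}$ does not change sign on $D$ (except on a null set) and $M\subseteq D$ has positive measure, the integrand is a.e.\ of one sign on $M$; the integral can therefore vanish only if $\textup{div}\{f\}\equiv 0$ a.e.\ on $M$, which contradicts the intended reading of the sign hypothesis. This would complete the argument and simultaneously recover the planar Bendixson theorem in the case $n=2$, where a region enclosed by a periodic orbit is the relevant invariant set of positive measure.

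The main obstacle I anticipate is the regularity of $M$: since $M$ is only assumed closed and measurable, the change-of-variables step must be justified without appealing to a smooth boundary. I would handle this either by working directly with the Jacobian form above (which needs only measurability) or by approximating $M$ from the interior by compact sub-level sets of a smooth auxiliary function and invoking the divergence theorem, using the fact that $f$ is necessarily tangent to the boundary of any smooth invariant set so that the surface flux vanishes. Beyond this regularity point, the argument is essentially Liouville's theorem plus the sign hypothesis.
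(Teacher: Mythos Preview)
Your proposal is correct and follows essentially the same route as the paper: argue by contradiction, invoke Liouville's formula to identify $\frac{d}{dt}\mathrm{vol}(\phi_t(M))$ with $\int_M \textup{div}\{f(x)\}\,dx$, use invariance to force this integral to vanish, and then derive a contradiction from the sign hypothesis. The paper simply cites Liouville's theorem (via Arnold) for the key identity, whereas you unpack it through the Jacobian $\det D\phi_t$ and the change-of-variables formula; you also flag and address the regularity issue for $M$, which the paper does not discuss.
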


\begin{proof}
Denote $\Gamma$ is the closed invariant subset with a positive measure in $D$, $\bar{\Gamma}$ is the boundary of $\Gamma$, $int\{\Gamma\}$ is the interior of $\Gamma$, $V = \bar{\Gamma} \cup int \{\Gamma\}$, and $\bar{V}$ is a volume of $V$.
%Assume, that system \eqref{eq1} has an invariant closed subset $\Gamma=\{x \in D: S(x)=C, ~ C>0 \}$. 
%Denote $V=\{x \in D: S(x) \leq C \}$ and $\bar{V}$ is a volume of $V$.
According to Liouville's theorem and \cite[Theorem 1  in p. 69]{Arnold14}, $d\bar{V}/dt=\int_{V}\textup{div}\{f(x)\} d V=0$, i.e. the volume  of a closed invariant subset does not change at any time $t$. 
If $\textup{div}\{f(x)\}$ does not change the sign in $D$, then the value of $d\bar{V}/dt$ has negative or positive value, i.e. the volume $\bar{V}$ is decreased or increased. 
We have a contradiction except possibly in a set of measure $0$, where $\textup{div}\{f(x)\}$ can be zero. 
Thus, system \eqref{eq1} has no invariant closed subset with a positive measure in $D$. Theorem \ref{Th4} is proved.
\end{proof}

\begin{theorem}
\label{Th5}
Let $D \subseteq \mathbb R^n$ is a simply connected domain. 
If there exists the continuously differentiable function $\rho(x)$ such that $\textup{div}\{\rho(x) f(x)\}$ does not change the sign for all $x \in D$ (except possibly in a set of measure $0$), then system \eqref{eq1} has no invariant closed subset  with a positive measure in $D$.
\end{theorem}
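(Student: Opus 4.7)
The plan is to reduce Theorem \ref{Th5} to Theorem \ref{Th4} by applying the latter to the auxiliary vector field $g(x) = \rho(x) f(x)$, since Theorem \ref{Th4} was already proved via Liouville's theorem and the sign of the divergence. The conclusion of Theorem \ref{Th5} is exactly the conclusion of Theorem \ref{Th4} with $f$ replaced by $g$, so if I can argue that invariant closed subsets of positive measure for $\dot{x} = f(x)$ are also invariant closed subsets for $\dot{y} = g(y)$, the result follows immediately.

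The key steps, in order, would be: first, to fix a putative invariant closed subset $\Gamma$ with positive measure in $D$ and consider the modified dynamical system $\dot{y} = \rho(y) f(y)$. Second, to observe that at every point $x$ with $\rho(x) \neq 0$, the vectors $f(x)$ and $\rho(x) f(x)$ are positively or negatively collinear, so the two systems share the same oriented or reversed orbits through such points (modulo a time reparametrization); hence any closed set invariant under $\dot{x} = f(x)$ remains invariant under $\dot{y} = g(y)$, and in particular $\Gamma$ is invariant for $g$. Third, to apply Theorem \ref{Th4} to the system $\dot{y} = g(y)$, using the hypothesis that $\textup{div}\{\rho(x) f(x)\} = \textup{div}\{g(x)\}$ does not change sign on $D$ except on a set of measure $0$, which yields the contradiction that $\Gamma$ cannot exist.

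As an alternative and perhaps more transparent route that avoids reparametrization subtleties, I would apply the Divergence theorem directly: on the boundary $\bar{\Gamma}$ the field $f$ is tangent almost everywhere (by invariance of $\Gamma$), so $\rho(x) f(x)$ is tangent too, giving $\oint_{\bar{\Gamma}} \rho(x) f(x)^{\rm T} n \, d\bar{\Gamma} = 0$; then $\int_V \textup{div}\{\rho(x) f(x)\} dV = 0$, which contradicts the hypothesis that the integrand has constant sign outside a set of measure zero (and is not identically zero).

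The main obstacle is the handling of the zero set of $\rho$. If $\{\rho = 0\}$ has positive measure inside $\Gamma$, then orbits of the auxiliary system there degenerate to equilibria, and the orbit correspondence argument of step two becomes delicate. I expect to sidestep this either by leaning on the ``except on a set of measure $0$'' clause in the statement (so that $\{\rho = 0\}$ does not spoil the sign of the integral) or by preferring the direct flux/divergence argument above, where the pointwise tangency of $\rho f$ on $\bar{\Gamma}$ follows trivially from the tangency of $f$ regardless of whether $\rho$ vanishes.
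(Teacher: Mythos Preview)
Your alternative route---apply the Divergence theorem directly to $\rho f$ over the region bounded by $\Gamma$, using that $f$ (hence $\rho f$) is tangent to $\bar{\Gamma}$ so the boundary flux vanishes---is essentially the paper's own argument. The paper phrases it by writing the invariant boundary as a level set $\{S(x)=C\}$ and factoring $\rho=\phi\,|\textup{grad}\{S\}|$, so that the surface integral becomes $\oint_{\Gamma}\phi(x)\,\textup{grad}\{S(x)\}^{\rm T}f(x)\,d\Gamma$, which vanishes because invariance forces $\textup{grad}\{S\}^{\rm T}f=0$ on $\Gamma$; your formulation dispenses with the level-set parametrization and is slightly cleaner, but the idea is the same.

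Your primary route---reduce to Theorem~\ref{Th4} by passing to the auxiliary system $\dot{y}=\rho(y)f(y)$---is a genuinely different argument that the paper does not take. It has the merit of making Theorem~\ref{Th5} a formal corollary of Theorem~\ref{Th4}. The reparametrization step is sound once spelled out: at points with $\rho\neq 0$ the two vector fields are nonzero scalar multiples, so their orbits coincide as point sets (possibly with time reversed), while at points with $\rho=0$ the auxiliary field vanishes and the point is an equilibrium that trivially remains in $\Gamma$. A short first-exit-time argument then shows any closed $f$-invariant set is $\rho f$-invariant, after which Theorem~\ref{Th4} applies verbatim to $g=\rho f$. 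The zero-set concern you flag is thus benign for the invariance step; it only enters, as you note, in the sign-of-integral step, where it is already absorbed by the measure-zero exception in the hypothesis. What your reduction buys is a clean conceptual link between the two theorems; what the paper's (and your alternative) direct flux argument buys is avoiding any discussion of orbit equivalence altogether.
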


\begin{proof}
%Let us use the notations from the proof of Theorem \ref{Th4}  as well as denote $\rho(x)=\phi(x) |\textup{grad}\{S(x)\}|$, where $\phi(x)$ and $S(x)$ are continuously differentiable functions and $\Gamma$ is described by the function $S(x)$ . 
Let $\Gamma=\{x \in D: S(x) = C \}$ be an invariant closed subset, $\bar{V}$ is the phase volume of $V=\{x \in D: S(x) \leq C \}$ and $\rho(x)=\phi(x) |\textup{grad}\{S(x)\}|$, where $\phi(x)$  is a continuously differentiable function. 
If $\textup{div}\{\rho(x)f(x)\} \neq 0$, then $\int_{V}\textup{div}\{\rho(x)f(x)\}dV \neq 0$. 
Using Divergence theorem as well as the proof of Corollary \ref{Th2}, we have $\int_{V}\textup{div}\{\rho(x)f(x)\}dV = \oint_{\Gamma}\phi(x) \textup{grad}\{S(x)\}^{\rm T} f(x) d \Gamma \neq 0$. 
If system \eqref{eq1} has an invariant closed subset  with a positive measure, then $\textup{grad}\{S(x)\}^{\rm T} f(x)=0$ and $\oint_{\Gamma}\phi(x) \textup{grad}\{S(x)\}^{\rm T} f(x) d \Gamma = 0$. We have a contradiction  except possibly in a set of measure $0$, where $\textup{grad}\{S(x)\}^{\rm T} f(x)$ can be zero. Thus, system \eqref{eq1} has no invariant closed subset with a positive measure in $D$. Theorem \ref{Th5} is proved.
\end{proof}

\section{Control law design}
\label{Sec3}

Consider a dynamical system in the form

\begin{equation}
\label{eq9}
\begin{array}{l} 
\dot{x}=\xi(x)+g(x)u(x),
\end{array}
\end{equation} 
where $u(x)$ is the control signal, the functions $\xi(x)$, $g(x)$ and $u(x)$ are continuously differentiable in $D$, $\xi(0)=0$, $g(0)=0$ and system \eqref{eq9} is controllable in $D$.

\begin{theorem}
\label{Th6}
Let $\rho(x)$ be a positive definite continuously differentiable function  in 
$x \in D$. The closed-loop system is stable (asymptotically stable) if the control law $u(x)$ is chosen such that at least one of the following conditions holds:

1) $\textup{div}\{\rho(x)(\xi(x)+g(x)u(x))\} \leq \rho(x) \textup{div}\{\xi(x)+g(x)u(x)\}$ $(\textup{div} \{\rho(x) (\xi(x)+g(x)u(x))\}<\rho(x)\textup{div} \{\xi(x)+g(x)u(x)\})$ for any $x \in D \setminus \{0\}$ and $\textup{div} \{\rho(x)(\xi(x)+g(x)u(x))\}\big|_{x=0}=0$;

2) $ \textup{div}\{\rho^{-1}(x)(\xi(x)+g(x)u(x))\} \geq 0$ $(\textup{div}\{\rho^{-1}(x)(\xi(x)+g(x)u(x))\}>0)$ for any $x \in D \setminus \{0\}$ and 
$\lim_{|x| \to 0} \big [\rho^2(x)\textup{div}\{\rho^{-1}(x)(\xi(x)+g(x)u(x))\}\big]=0$;

3) $\textup{div}\{\rho(x)(\xi(x)+g(x)u(x))\} \leq \beta(x) \rho^2(x)\textup{div} \{\rho^{- 1}(x) (\xi(x)+g(x)u(x))\}$, $\beta \geq 1$ $(\textup{div}\{\rho(x)(\xi (x)+g(x)u(x))\}
<\beta(x)\rho^2(x)\textup{div}\{\rho^{-1}(x)(\xi(x)+g(x)u(x))\})$, 
where $\beta(x)>1$ and $\textup{div}\{\xi(x)+g(x)u(x)\} \leq 0$ or only 
$\beta(x)=1$ for any $x \in D \setminus \{0\}$, as well as, 
$\textup {div} \{\rho(x) (\xi(x)+g(x)u(x))\} \big|_{x=0}=0$ and 
$\lim_{|x| \to 0} \big[\rho(x) \textup{div}\{\rho^{-1}(x)(\xi(x)+g(x)u(x))\} \big]=0$.

4) $\textup{div}\{\rho(x) (\xi(x)+g(x)u(x))\} \leq 0$  and $\textup{div}\{\rho^{-1}(x) (\xi(x)+g(x)u(x))\} \geq 0$ $(\textup{div}\{\rho(x) (\xi(x)+g(x)u(x))\} < 0$  and $\textup{div}\{\rho^{-1}(x) (\xi(x)+g(x)u(x))\} > 0)$, as well as, 
$\textup{div}\{\rho(x)(\xi(x)+g(x)u(x))\} \big |_{x = 0}=0$ and $\lim_{|x| \to 0} \big [\rho(x) \textup{div} \{\rho^{-1}(x)(\xi(x)+g(x)u(x))\} \big]=0$.

\end{theorem}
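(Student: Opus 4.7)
The plan is to reduce Theorem \ref{Th6} directly to Theorem \ref{Th2a} by viewing the closed-loop dynamics as an autonomous system of the form \eqref{eq1}. Set $f_{cl}(x) = \xi(x) + g(x) u(x)$. First, I would check that $f_{cl}$ satisfies the standing assumptions on $f$ in Theorem \ref{Th2a}: continuous differentiability of $f_{cl}$ in $D$ follows from the continuous differentiability of $\xi$, $g$, and $u$, while $f_{cl}(0) = \xi(0) + g(0) u(0) = 0$ follows from $\xi(0) = 0$ and $g(0) = 0$. Thus $\dot{x} = f_{cl}(x)$ fits exactly the framework of system \eqref{eq1}.

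Next, I would observe that each of the four items in the hypothesis of Theorem \ref{Th6} is obtained from the corresponding item of Theorem \ref{Th2a} by the substitution $f(x) \mapsto \xi(x) + g(x) u(x)$. In particular, the expressions $\textup{div}\{\rho(x)(\xi(x)+g(x)u(x))\}$, $\textup{div}\{\rho^{-1}(x)(\xi(x)+g(x)u(x))\}$, and $\textup{div}\{\xi(x)+g(x)u(x)\}$ play the roles of $\textup{div}\{\rho(x)f(x)\}$, $\textup{div}\{\rho^{-1}(x)f(x)\}$, and $\textup{div}\{f(x)\}$ respectively, and the boundary conditions at $x = 0$ match verbatim. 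Consequently, once $u(x)$ is chosen so that one of the four conditions in Theorem \ref{Th6} is satisfied, the corresponding case of Theorem \ref{Th2a} applies to $f_{cl}$.

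Invoking that case of Theorem \ref{Th2a} then yields stability (asymptotic stability under the strict inequalities) of the origin for the closed-loop system $\dot{x} = f_{cl}(x)$, which is what is asserted. No new analytic estimates are required beyond those already carried out in the proof of Theorem \ref{Th2a}; the identity $\textup{div}\{\rho f_{cl}\} = \textup{grad}\{\rho\}^{\rm T} f_{cl} + \rho\,\textup{div}\{f_{cl}\}$ and its $\rho^{-1}$ counterpart hold for $f_{cl}$ exactly as they did for $f$, so the algebraic manipulations producing $\textup{grad}\{\rho\}^{\rm T} f_{cl} \le 0$ (or $< 0$) in each case are identical.

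I do not expect any genuine obstacle; the only thing to verify is the purely substitutional nature of the hypothesis, together with the preservation of the equilibrium at the origin thanks to $g(0) = 0$. The four cases can therefore be dispatched in a single short paragraph each, mirroring the four paragraphs in the proof of Theorem \ref{Th2a}, with the sole bookkeeping change that $f_{cl}$ replaces $f$ throughout.
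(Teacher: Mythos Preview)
Your proposal is correct and matches the paper's own proof essentially verbatim: the paper simply notes that, since system \eqref{eq9} is controllable in $D$, the proof is identical to that of Theorem \ref{Th2a} after setting $f(x)=\xi(x)+g(x)u(x)$. Your extra care in verifying $f_{cl}(0)=0$ via $\xi(0)=0$ and $g(0)=0$ is a welcome detail that the paper leaves implicit.
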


Since system \eqref{eq9} is controllable in $D$, the proof of Theorem \ref{Th6}  is similar to the proof of Theorem \ref{Th2a} (denoting by $f(x)=\xi(x)+g(x)u(x)$).

\begin{remark}
If the control law design is based on the method of Lyapunov functions, then it is required to solve the algebraic inequality $\textup{grad}\{V\}(f+gu)<0$.  According to Theorem \ref{Th6}, the control law is chosen from the feasibility of differential inequality. This gives new opportunities for the control law design.
\end{remark}

\textit{Example 5.}
Consider the system

\begin{equation}
\label{eq7ex}
\begin{array}{l} 
\dot{x}_1=dx_2 - x_1 x_2^2,
\\
\dot{x}_2=u,
\end{array}
\end{equation}
where $d$ takes the values of $0$ and $1$.
It is required to design the control law $u$ that ensures the asymptotic stability of \eqref{eq7ex}. Obviously, system \eqref{eq7ex} is not asymptotically stable for $u=0$ and for any values of $d$.
Choose $\rho(x)=|x|^{2\alpha}$, $\alpha$ is a positive integer and use the third case of Theorem \ref{Th6}.

1. Let $d=0$. Compute $\textup{div}\{\rho(x)(\xi(x)+g(x)u(x))\}-\beta(x)\rho^2(x)\textup{div}\{\rho^{-1}(x)(\xi(x)+g(x)u(x))\}=-2\alpha(1+\beta)x_1^2x_2^2+2\alpha (1+\beta) ux_2+(1-\beta)(-x_2^2+\frac{\partial u}{\partial x_2})(x_1^2+x_2^2)$.
Choosing $u=-x_2^3$, we get $\textup{div}\{\rho(x)(\xi(x)+g(x)u(x))\}-\beta(x)\rho^2(x)\textup{div}\{\rho^{-1}(x)(\xi(x)+g(x)u(x))\}<0$ for $\beta \geq 1$, $\alpha>\frac{2(\beta-1)}{\beta+1}$ and $x_2 \neq 0$, as well as, $\textup{div}\{\xi(x)+g(x)u(x)\} \leq 0$. The phase trajectories of the closed-loop system are shown in Fig.~\ref{Fig7aa},\textit{a}.

2. Let $d=1$. Compute $\textup{div}\{\rho(x)(\xi(x)+g(x)u(x))\}-\beta(x)\rho^2(x) \textup{div}\{\rho^{-1}(x)(\xi(x)+g(x)u(x))\}=2\alpha(1+\beta)x_1x_2-2\alpha(1+ \beta) x_1^2 x_2^2 + 2\alpha(1+\beta) ux_2 + (1-\beta) \left(-x_2^2 + \frac{\partial u}{\partial x_2} \right) (x_1^2 + x_2^2)$.
Choosing $u= -x_1-(\beta-1)x_2^3$, we get $\textup{div}\{\rho(x)(\xi(x)+g(x) u(x))\}-\beta(x)\rho^2(x)\textup{div}\{\rho^{-1}(x)(\xi(x)+g(x)u(x))\}<0$ for 
$\beta \geq 1$ and $\alpha>\max\left\{\frac{(\beta-1)(3\beta-2)}{2(\beta+1)}, \frac{3\beta-2}{2(\beta+1)}\right\}$, as well as, $\textup{div}\{\xi(x)+g(x)u(x)\} \leq 0$.
The phase trajectories of the closed-loop system are shown in Fig.~\ref{Fig7aa}, \textit{b} for $\beta=2$.

\begin{figure}[h!]
\begin{minipage}[h]{1\linewidth}
\center{\includegraphics[width=0.7\linewidth]{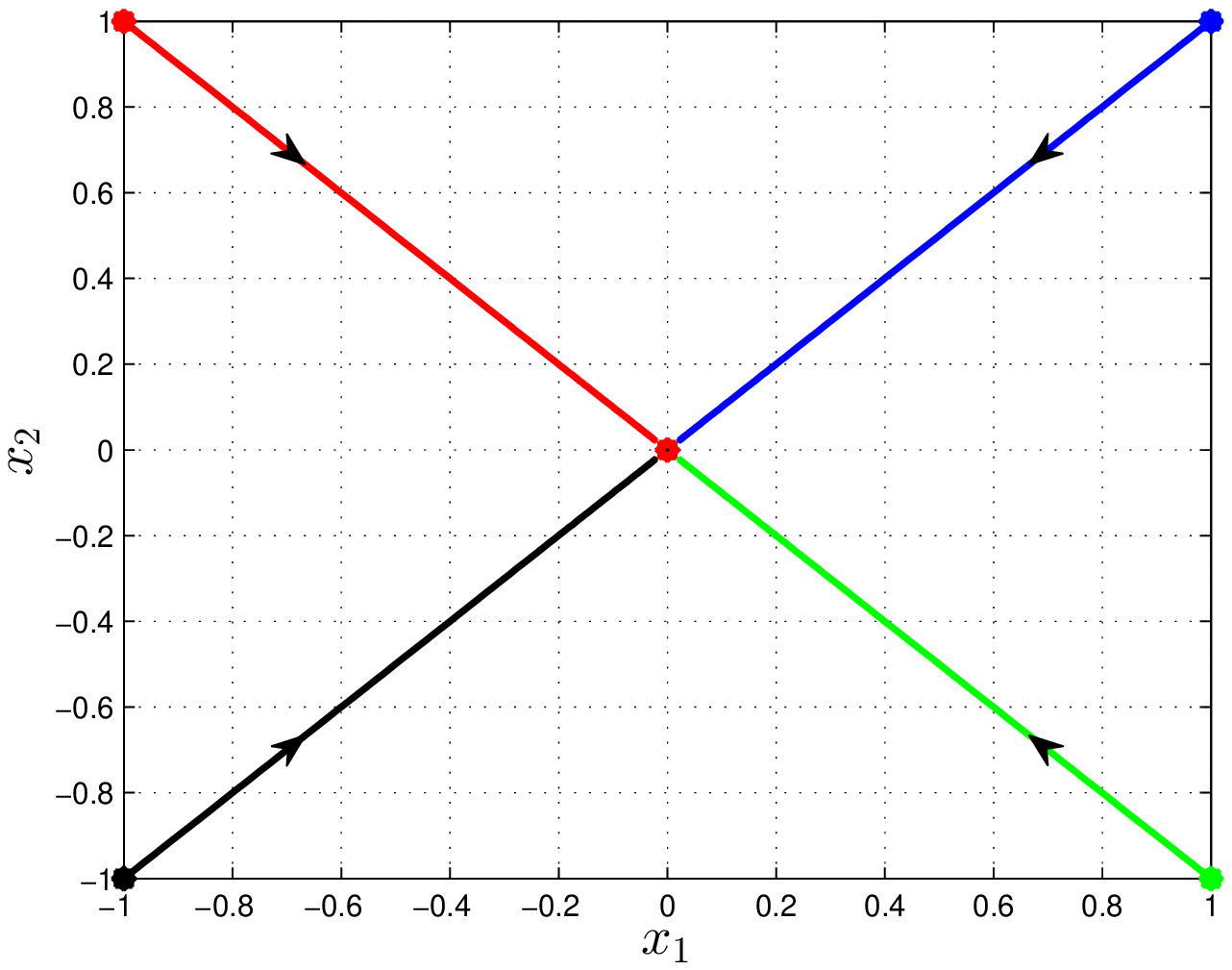}} \textit{a} \\ 
\end{minipage}
\vfill
\begin{minipage}[h]{1\linewidth}
\center{\includegraphics[width=0.7\linewidth]{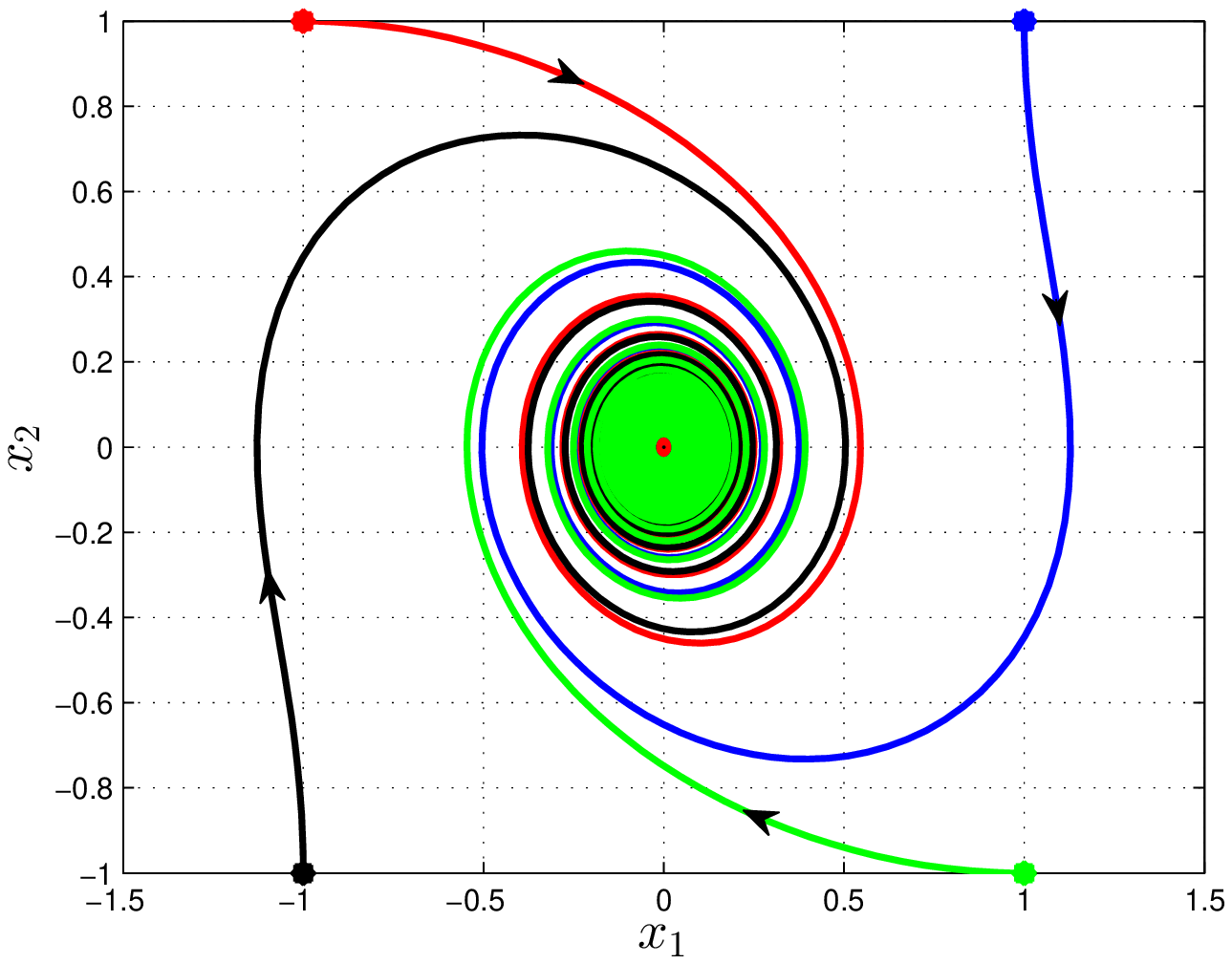}} \textit{b} \\
\end{minipage}
\caption{The phase trajectories in the closed-loop system for $d=0$ (\textit{a}) and for $d=1$, $\beta=2$ (\textit{b}).}
\label{Fig7aa}
\end{figure}

\section{Conclusion}
\label{Sec4}

A method for stability  and instability study of dynamical systems using the properties of the flow and divergence of the vector field is proposed. 
To study the stability and instability, the existence of a certain type of integration surface or the existence of an auxiliary scalar function is required. 
Necessary and sufficient stability and instability conditions are proposed. 
The extension of Bendixon and Bendixon-Dulac theorems for $n$th dimensional systems is given.

The obtained results are applied to synthesis the feedback control law for dynamical systems. 
It is shown that the control law is found as a solution of a differential inequality, while the control law based on the method of Lyapunov functions is found as a solution of an algebraic inequality.

\section{Acknowledgment}
The results of Section 3 were developed under support of RSF (grant 18-79-10104) in IPME RAS. The other researches were partially supported by grants of Russian Foundation for Basic Research No. 19-08-00246 and Government of Russian Federation, Grant 074-U01.

\end{document}